\let\accentvec\vec % fix for \vec causing trouble in amsmath with llncs
\newcommand{\titel}{Small-Area Orthogonal Drawings \\ of 3-Connected Graphs}
\definecolor{hellblau}{rgb}{0.2,0.4,1} % eigene Farben definieren
\definecolor{dunkelblau}{rgb}{0,0,0.8}
\definecolor{dunkelgruen}{rgb}{0,0.5,0}
 \let\vec\accentvec % fix for \vec causing trouble in amsmath with llncs
\theoremstyle{plain} % kursiv
	\newtheorem{satz}{Satz}[] %section
	\newtheorem{theorem}[satz]{Theorem}
	\newtheorem{lemma}[satz]{Lemma}
	\newtheorem{observation}[satz]{Observation}
\theoremstyle{remark} % normal
\theoremstyle{definition} % normal mit fettem Titel
	\newtheorem{definition}[satz]{Definition}
	\newtheorem{corollary}[satz]{Corollary}
\newcommand{\x}[2]{\ensuremath{x^{#1{\mbox{-}#2}}}}
\newcommand{\indeg}{\mathit{indeg}}
\newcommand{\outdeg}{\mathit{outdeg}}
\renewcommand{\paragraph}[1]{\smallskip\noindent{\em #1}}
\begin{document}
	\title{\titel}
	\author{}
		\author{Therese Biedl\thanks{David R. Cheriton School of Computer Science, University of Waterloo, Canada.  Supported by NSERC.} \and Jens M. Schmidt\thanks{Institute of Mathematics, TU Ilmenau, Germany.}}
%		\institute{}
	%\keywords={This research was partly done at Max Planck Institute for Informatics, Saarbr\"ucken.}
	%\date{\renewcommand{\dateseparator}{.} \ddmmyyyydate \today}
	\date{}
	\maketitle

%\vspace{-1cm}
\begin{abstract}
It is well-known that every graph with maximum degree 4 has an orthogonal
drawing with area at most $\frac{49}{64}n^2+O(n)\approx 0.76n^2$. In this paper, we show that if the
graph is 3-connected, then the area can be reduced even further to $\frac{9}{16}n^2+O(n) \approx 0.56n^2$. The drawing
uses the 3-canonical order for (not necessarily planar) 3-connected graphs, which can be computed
in linear time from the Mondshein-sequence. To our knowledge, this is the first application of the 3-canonical
order on non-planar graphs in graph-drawing.
\end{abstract}

%\begin{center}
%\small \textbf{keywords:} Mondshein sequence, construction sequence, 3-canonical order, graph drawing, non-separating ear decomposition, independent spanning trees, reporting disjoint paths, $k$-partition
%\end{center}

\section{Introduction}

An orthogonal drawing of a graph $G=(V,E)$ is an assignment of vertices to \emph{points} and edges to \emph{polygonal lines}
connecting their endpoints such that all edge-segments are horizontal or vertical. Edges
are allowed to intersect, but only in single points that are not bends of the polygonal lines.
Such an orthogonal drawing can exist only if every vertex has degree at most 4;
we call such a graph a {\em 4-graph}. It is easy to see that every 4-graph has an orthogonal drawing with area $O(n^2)$,
and this is asymptotically optimal \cite{Valiant1981}.

For planar 2-connected graphs, multiple
authors showed independently \cite{TT86,RT86} how to achieve area $n\times n$,
and this is optimal \cite{TTV91}.
We measure the drawing-size as follows. Assume (as we do throughout the paper) 
that all vertices and bends are at points with integral coordinates.
If $H$ rows and $W$ columns of the integer grid intersect the drawing, then we say that
the drawing occupies a {\em $W\times H$-grid} with {\em width} $W$, {\em height} $H$, {\em half-perimeter}
$H+W$ and {\em area} $H\cdot W$.%
\footnote{Some papers count as width/height the width/height of the smallest enclosing axis-aligned
box. This is one unit less than with our measure.}

For arbitrary graphs (i.e., graphs that are not necessarily planar), improved bounds on the area of
orthogonal drawings were developed much later, decreasing from $4n^2$ \cite{Schaeffter95} to $n^2$
\cite{Biedl1998} to $0.76n^2$ \cite{Papakostas1998}.  (In all these statements we omit lower-order terms
for ease of notation.)

\paragraph{Our results: }  In this paper, we decrease the area-bound for orthogonal drawings
further to $0.56n^2+O(n)$ under the assumption that the graph is 3-connected.  The approach
is similar to the one by Papakostas and Tollis \cite{Papakostas1998}: add vertices to the drawing
in a specific order, and pair some of these vertices so that in each pair one vertex re-uses
a row or column that was used by the other. The main difference in our paper is that 
3-connectivity allows the use of a different, stronger, vertex order.

It has been known for a long
time that any {\em planar} 3-connected graph has a so-called canonical order \cite{FPP90,Kant1996}, 
which is useful for planar graph drawing algorithms. It was mentioned
that such a canonical order also exists in non-planar graphs (e.g.~in \cite[Remark on p.113]{FM94}), but
it was not clear how to find it efficiently, and it has to our knowledge not been used
for graph drawing algorithms.  Recently, the second author studied the so-called
Mondshein-sequence, which is an edge partition of a 3-connected graph with special
properties \cite{Mondshein1971}, and showed that it can be computed in linear time \cite{Schmidt2014}.  
From this Mondshein-sequence, one can easily find the canonical order for non-planar 3-connected graphs
in linear time~\cite{Schmidt2013}; we call this a {\em 3-canonical order}.

We use this 3-canonical order to add vertices to the orthogonal drawing. This almost
immediately lowers the resulting area, because vertices with one incoming edge can 
only occur in chains.  We then mimic the pairing-technique of Papakostas and Tollis,
and pair groups of the 3-canonical order in such a way that even more rows and columns
can be saved, resulting in a half-perimeter of $\frac{3}{2}n+O(1)$ and the area-bound
follows.

No previous algorithms were known that achieve smaller area for 3-connected 4-graphs 
than for 2-connected 4-graphs. For {\em planar} graphs, the orthogonal drawing
algorithm by Kant \cite{Kant1996} draws 
3-connected planar 4-graphs with area $(\frac{2}{3}n)^2+O(n)$ \cite{Bie-SWAT96},
while the best-possible area for planar 2-connected graphs is $n^2$ \cite{TTV91}.

\section{Preliminaries}

Let $G=(V,E)$ be a graph with $n=|V|$ vertices and $m=|E|$ edges.
The {\em degree} of a vertex $v$ is the number of incident edges.
In this paper all graphs are assumed to be {\em 4-graphs}, i.e.,
all vertex degrees are at most 4.  A graph is called {\em 4-regular}
if every vertex has degree exactly 4; such a graph has $m=2n$ edges.

A graph $G$ is called {\em connected} if for any two vertices $u,v$ there
is a path in $G$ connecting $u$ and $v$.  It is called 
3-connected if $n>3$ and for any two vertices $u,v$ the graph $G-\{u,v\}$
is connected.

A {\em loop} is an edge $(v,v)$ that connects an endpoint with
itself.  A {\em multi-edge} is an edge $(u,v)$ for which another
copy of edge $(u,v)$ exists. When not otherwise stated the
graph $G$ that we want to draw is {\em simple}, i.e., it has neither loops 
nor multi-edges.  While modifying $G$, we will sometimes temporarily add
a {\em double edge}, i.e., an edge for which exactly one other 
copy exists (we refer always to the added edge as double edge, the copy is not a double edge).

\subsection{The 3-canonical order}

\begin{definition}
Let $G$ be a 3-connected graph. A {\em 3-canonical order} is a partition of $V$ into
groups $V=V_1\cup \dots \cup V_k$ such that
\begin{itemize}
\item $V_1=\{v_1,v_2\}$, where $(v_1,v_2)$ is an edge.
\item $V_k=\{v_n\}$, where $(v_1,v_n)$ is an edge.
\item For any $1<i<k$, one of the following holds:
	\begin{itemize}
	\item $V_i=\{z\}$, where $z$ has at least two predecessors and at least one successor.  
	\item $V_i=\{z_1,\dots,z_\ell\}$ for some $\ell\geq 2$, where 
		\begin{itemize}
		\item $z_1,\dots,z_\ell$ is an induced path in $G$ (i.e. edges
			$z_1-z_2-\dots-z_\ell$ exist, and there are no
			edges $(z_i,z_j)$ with $i<j-1$),
		\item $z_1$ and $z_\ell$ have exactly one predecessor each, 
			and these predecessors are different,
		\item $z_j$ for $1<j<\ell$ has no predecessor,
		\item $z_j\in V_i$ for $1\leq j\leq \ell$ has at least one successor.
		\end{itemize}
	\end{itemize}
\end{itemize}

	Here, a {\em predecessor} [{\em successor}] of a vertex in $V_i$ is a neighbor that occurs in a group $V_h$ with 
	$h<i$ [$h>i$].  

\end{definition}

We call a vertex group $V_i$ a \emph{singleton} if $|V_i|=1$, and a {\em chain} if $|V_i|\geq 2$ and $i\geq 2$.
Distinguish chains further into {\em short chains} with $|V_i|=2$ and {\em long chains} with $|V_i|\geq 3$.
A 3-canonical order imposes a natural orientation on the edges of the graph from lower-indexed groups to
higher-indexed groups and, for edges within a chain-group, from one (arbitrary) end of the path to the other. This implies $\indeg(v)\geq 2$ for any singleton, $\indeg(v)=2$ for exactly one vertex of each chain, and
$\indeg(v)=1$ for all other vertices of a chain.

\begin{figure}[ht]
\hspace*{\fill}
\includegraphics[width=55mm,page=1]{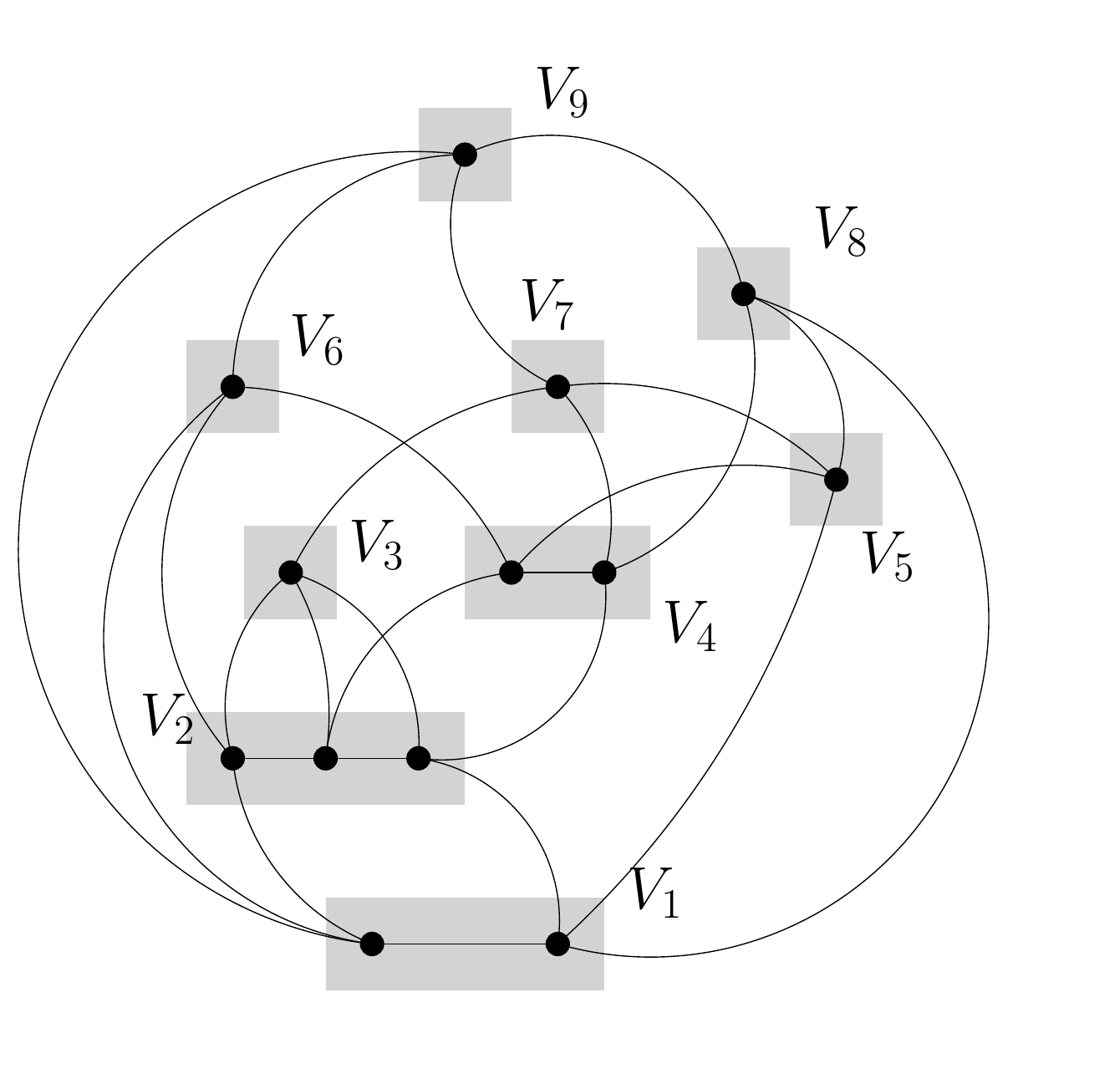}
\hspace*{\fill}
\includegraphics[width=55mm,page=2]{example.pdf}
\hspace*{\fill}
\caption{A 4-regular 3-connected graph with a 3-canonical order,
and the drawing created with our algorithm\protect\footnotemark.
$V_2$ is a long chain, $V_4$ is a short chain,
$V_5$ is a 2-2-singleton, $V_3,V_6,V_7$ and $V_8$ are 3-1-singletons.
}
\label{fig:example}
\end{figure}
\footnotetext{For
illustrative purposes we
show the drawing exactly as created, even though many more grid lines and
bends could be saved with straightforward compaction steps.}

Numerous related methods of ordering vertices of 3-connected graphs exist, e.g.\ the \emph{(2,1)-sequence}~\cite{Mondshein1971}, 
the \emph{non-separating ear decomposition}~\cite{Cheriyan1988,Schmidt2014}, and, limited to planar
graphs, the {\em canonical order for maximal planar graphs}~\cite{FPP90}, the
{\em canonical order for 3-connected planar graphs}~\cite{Kant1996} and \emph{orderly spanning trees}~\cite{Chiang2005}.
A 3-canonical order of a 3-connected graph implies all these orders, up to minor subtleties.

A convenient way to prove that a 3-canonical order exists and can be computed efficiently is to use non-separating ear decompositions. This is a partition of the edges into {\em ears} $P_1\cup \dots \cup P_k=E$ such that
$P_1$ is an induced cycle, $P_i$ for $i>1$ is a non-empty induced path (possibly consisting of one edge) that intersects $P_1\cup \dots \cup P_{i-1}$ in 
exactly its endpoints,
and $G-\left( \bigcup_{h=1}^{i} P_h\right)$ is connected for every $i < k$. 
Such a non-separating ear decomposition exists for any 3-connected graph~\cite{Cheriyan1988}, and we can
even fix the edge $(v_1,v_n)$ and require that $v_1$ is in the cycle $P_1$ and that $v_n$ is the only vertex in $P_k$; hence $P_k$ will be a singleton.
Further, such a non-separating ear decomposition (under the name \emph{Mondshein sequence}) 
can be computed in linear time~\cite{Schmidt2014}. The sets of newly added vertices in $P_i$ for $i=1,\dots,k$,
omitting empty groups, will be the vertex groups of a 3-canonical order. Although the ears in such a decomposition allow vertices in a chain $V_i$ to have arbitrarily many incoming incident edges, we can get rid of these extra edges by short-cutting ears (see Lemmas 8 and 12 in~\cite{Schmidt2013}). Therefore one can obtain a 3-canonical order from a Mondshein sequence in linear time.

\subsection{Making 4-graphs 4-regular}

It will greatly simplify the description of the algorithm if we only give it
for 4-regular graphs.  Thus, we want to modify a 4-graph $G$ 
such that the resulting graph $G'$ is 4-regular, draw $G'$,
and then delete added edges to obtain a drawing of $G$.
However, we must maintain a simple graph since the existence
of 3-canonical orders depends on simplicity.  This turns out
to be impossible, but allowing one double edge is sufficient.

\begin{lemma}
\label{lem:4regular}
Let $G$ be a simple 3-connected 4-graph with $n\geq 5$.
Then we can add edges to $G'$ such that the resulting
graph $G'$ is 3-connected, 4-regular, and has at most
one double edge.
\end{lemma}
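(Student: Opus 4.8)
The plan is to identify the vertices of deficient degree and add new edges forming a perfect matching on them, using $3$-connectivity to keep multi-edges under control. Since $G$ is $3$-connected it has minimum degree at least $3$, and since it is a $4$-graph every vertex has degree exactly $3$ or $4$. Let $D$ be the set of degree-$3$ vertices; each needs exactly one additional incident edge, and an edge may only be added between two vertices of $D$ (adding an edge at a degree-$4$ vertex would violate the degree bound). An easy preliminary is that $|D|$ is even: summing degrees gives $2|E| = 4n - |D|$, and as $2|E|$ and $4n$ are even, so is $|D|$. Moreover, adding edges can never destroy $3$-connectivity, since if $G-\{u,v\}$ is connected then so is any supergraph on the same vertex set; hence that requirement comes for free, and no loop is ever added because we only join distinct vertices.

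Next I would reduce the lemma to a matching statement in the complement. Let $H=\overline{G}[D]$ be the complement of $G$ restricted to $D$; an edge of $H$ is exactly a non-edge of $G$ between two deficient vertices, i.e.\ an edge we are allowed to add without creating a multi-edge. A perfect matching of $H$ would raise every vertex of $D$ to degree $4$ with no double edge at all. If instead a maximum matching of $H$ leaves two vertices $x,y$ uncovered, then $x$ and $y$ must be adjacent in $G$ (otherwise $xy\in E(H)$ and the matching could be enlarged), so we may pair them with a \emph{single} double edge. Since $|D|$ is even, it therefore suffices to show that a maximum matching of $H$ leaves at most two vertices uncovered.

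To bound the uncovered vertices I would use that each vertex of $D$ has at most three neighbours inside $D$, so $\delta(H)\ge |D|-4$. For $|D|\ge 8$ this gives $\delta(H)\ge |D|/2$, and Dirac's theorem yields a Hamiltonian cycle and hence a perfect matching. For $|D|=6$ we have $\delta(H)\ge 2$, and a graph on six vertices with no two disjoint edges must be a triangle or a star together with isolated vertices, none of which has minimum degree $2$; hence its matching number is at least $2$ and at most two vertices remain uncovered. For $|D|=4$ it suffices that $H$ has an edge, which holds unless $G[D]=K_4$; but a $K_4$ on $D$ would be a connected component of $G$ (its vertices already have degree $3$ inside $D$), contradicting $3$-connectivity since $n\ge 5$. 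Finally $|D|=2$ is immediate: add the edge if the two vertices are non-adjacent, and a double edge otherwise. In every case the resulting $G'$ is $4$-regular, $3$-connected, and simple apart from the at most one introduced double edge.

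I'd expect the matching existence for small $|D|$ to be the only genuinely delicate point: ruling out $G[D]=K_4$ when $|D|=4$, and the configuration $\overline{G}[D]=2K_3$ (equivalently $G[D]=K_{3,3}$) when $|D|=6$, are exactly where $3$-connectivity is used. Everything else — the parity of $|D|$, the preservation of $3$-connectivity under edge additions, and the large-$|D|$ case via the degree bound — is routine. The one subtlety worth stating carefully is that a single double edge always suffices precisely because at most two vertices ever remain unmatched.
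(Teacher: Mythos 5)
Your proof is correct, but it follows a genuinely different route from the paper's. The paper argues greedily and iteratively: as long as at least four degree-3 vertices remain, they cannot be pairwise adjacent (four mutually adjacent degree-3 vertices would form a $K_4$ component, impossible in a connected graph with $n\geq 5$), so one can keep adding a single edge between two non-adjacent degree-3 vertices, preserving simplicity and 3-connectivity, until exactly two such vertices remain; the final edge between those two is the only possible double edge. You instead choose all new edges in one shot, as a near-perfect matching in the complement $H=\overline{G}[D]$, invoking Dirac's theorem for $|D|\geq 8$ and explicit analysis for $|D|\in\{2,4,6\}$. Your version buys a slightly cleaner output — every added edge except the possible double edge is a non-edge of the \emph{original} $G$, rather than of an intermediate graph — but pays with heavier tools (Hamiltonicity, structure of matchings) and a case split, whereas the paper's greedy step rests on a single elementary observation. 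One slip in your closing commentary: 3-connectivity is not what handles $\overline{G}[D]=2K_3$ when $|D|=6$, and indeed it cannot rule that configuration out, since $G=K_{3,3}$ is 3-connected with $n=6\geq 5$ and realizes it exactly; but your proof body never needs to exclude it — there the maximum matching covers four vertices, and the two leftover vertices are adjacent in $G$ and receive the double edge. The only place 3-connectivity enters your argument beyond guaranteeing minimum degree 3 is the exclusion of $G[D]=K_4$ when $|D|=4$, which parallels the paper's own $K_4$ observation.
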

\begin{proof}
Since $G$ is 3-connected, any vertex has degree 3 or 4.
If there are four or more vertices of degree 3, then 
they cannot be mutually adjacent (otherwise $G=K_4$,
which contradicts $n\geq 5$).  So then we can add an edge 
between two non-adjacent vertices of degree 3; this 
maintains simplicity and 3-connectivity.

Repeat until only two vertices of degree 3 are left (recall that the number of vertices of odd degree is even).
Now add an edge between these two vertices even if there
existed one already; this edge is the only one that may
become a double edge.  The resulting graph is 4-regular
and satisfies all conditions.
\end{proof}

\section{Creating orthogonal drawings}
From now on let $G$ be a 3-connected 4-regular graph that has no loops and
at most one double edge.  Compute a 3-canonical order
$V=V_1\cup \dots \cup V_k$ of $G$ with $V_k=\{v_n\}$, choosing $v_1v_n$ to be the double edge if there is one.
Let $x^{\text{short}}$ and $x^{\text{long}}$ be the number of short and long chains.
Let $\x{j}{\ell}$ be the number of  vertices with in-degree $j$ and out-degree $\ell$.
Since $G$ is 4-regular, we must have $j+\ell=4$.
A {\em $j$-$\ell$-singleton} is a vertex $z$ that constitutes a
singleton group $V_i$ for $1 < i \leq k$ and that has in-degree $j$ and out-degree $\ell$.
Using properties of the 3-canonical order and some edge-counting arguments,
the following is easily shown:

\begin{observation}
\label{obs:xij}
Let $G$ be a 4-regular graph with a 3-canonical order.   Then
\begin{enumerate}
\item $\x{0}{4}=\x{4}{0}=1$.
\item $\x{1}{3}=\x{3}{1}+\Theta(1)$.
\item Every chain $V_i$ contributes one to $\x{2}{2}$ and $|V_i|-1$
	to $\x{1}{3}$.
\end{enumerate}
\end{observation}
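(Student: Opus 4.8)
The plan is to read everything off the orientation that the 3-canonical order imposes on the edges, combined with the handshake identity. Recall that this orientation sends each edge from the lower-indexed group to the higher-indexed one (and along the path inside a chain), and that the preceding paragraph already records the in-degrees: $\indeg(v)\ge 2$ for every singleton, $\indeg(v)=2$ for exactly one vertex of each chain, and $\indeg(v)=1$ for all remaining chain vertices; since $G$ is $4$-regular I will freely replace in-degree $j$ by out-degree $4-j$. For part (1) I would identify the unique source and sink. The edge $(v_1,v_2)$ of the base group is oriented $v_1\to v_2$ and every other edge at $v_1$ leaves $v_1$ (as $V_1$ is the first group), so $\indeg(v_1)=0$ and $v_1$ is a $0$-$4$-vertex; conversely every other vertex has an incoming edge ($v_2$ from $v_1$, singletons at least two, chain vertices one or two), so $v_1$ is the only source and $\x{0}{4}=1$. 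The mirror argument --- $V_k=\{v_n\}$ is the last group and every singleton and chain vertex has a successor, hence out-degree at least $1$ --- shows $v_n$ is the only sink and $\x{4}{0}=1$.

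Part (3) is then immediate from the recorded in-degrees. A chain $V_i$ (so $i\ge 2$ and $|V_i|\ge 2$) contains exactly one vertex of in-degree $2$, which by $4$-regularity has out-degree $2$ and is a $2$-$2$-vertex, while its other $|V_i|-1$ vertices have in-degree $1$, hence out-degree $3$, and are $1$-$3$-vertices. The base group $V_1$ is not a chain (a chain requires $i\ge 2$), so no double counting occurs there.

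For part (2) I would use one edge-counting identity. Since orienting the edges gives $\sum_v\indeg(v)=\sum_v\outdeg(v)=m$, we have $\sum_v\bigl(\indeg(v)-\outdeg(v)\bigr)=0$; a vertex of type $j$-$(4-j)$ contributes $2j-4$, so summing over the five types yields $-4\x{0}{4}-2\x{1}{3}+2\x{3}{1}+4\x{4}{0}=0$, that is $\x{1}{3}-\x{3}{1}=2(\x{4}{0}-\x{0}{4})$. Plugging in $\x{0}{4}=\x{4}{0}=1$ from part (1) gives $\x{1}{3}=\x{3}{1}$, which is in particular $\x{3}{1}+\Theta(1)$; the weaker $\Theta(1)$ phrasing simply leaves room for the minor bookkeeping caused by the single admissible double edge and the special group $V_1$. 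The one step that needs genuine care is part (1), namely verifying that \emph{every} vertex other than $v_1$ has positive in-degree and every vertex other than $v_n$ positive out-degree, since these two facts both pin down $\x{0}{4}=\x{4}{0}=1$ and, through them, drive the balance in part (2); everything else is routine arithmetic.
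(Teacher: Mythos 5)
Your proof is correct and takes exactly the route the paper intends: the paper states this observation without an explicit proof, saying only that it follows from ``properties of the 3-canonical order and some edge-counting arguments,'' which is precisely your combination of identifying $v_1$ and $v_n$ as the unique source and sink, reading off the chain in-degrees, and applying the handshake identity. The exact equality $\x{1}{3}=\x{3}{1}$ you derive is even slightly stronger than the stated $\Theta(1)$ slack, which (as you note) merely absorbs bookkeeping conventions such as the orientation of the edge inside $V_1$ and the possible double edge.
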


\subsection{A simple algorithm}

As in many previous orthogonal drawing papers~\cite{Biedl1998,Kant1996,Papakostas1998} the idea is to draw the graph $G_i$ induced by
$V_1\cup \dots \cup V_i$ in such a way that all {\em unfinished edges}
(edges with one end in $G_i$ and the other in $G-G_i$) end in a column
that is unused above the point where the drawing ends.

\paragraph{Embedding the first two vertices: } If $(v_1,v_n)$ is a single edge, then $v_1$ and $v_2$ are embedded exactly as in~\cite{Biedl1998}; refer to Fig.~\ref{fig:embedBK}.
If $(v_1,v_n)$ is a double edge, then it was added only for the purpose of making the graph 4-regular and need not be drawn. In that case we omit one of the outgoing edges of $v_1$ having a bend.

\paragraph{Embedding a singleton: } If $V_i$ is a singleton $\{z\}$, we embed $z$ exactly as in~\cite{Biedl1998}; refer to Fig.~\ref{fig:embedBK}.
For $\indeg(z)\in \{2,3\}$, this adds one new row and $\outdeg(z)-1=3-\indeg(z)$ many new columns. For $\indeg(z) = 4$, $z = v_n$; if $(v_1,v_n)$ is a double edge, we omit the edge having two bends.

\begin{figure}[b]
\hspace*{\fill}
\includegraphics[page=4,width=27mm]{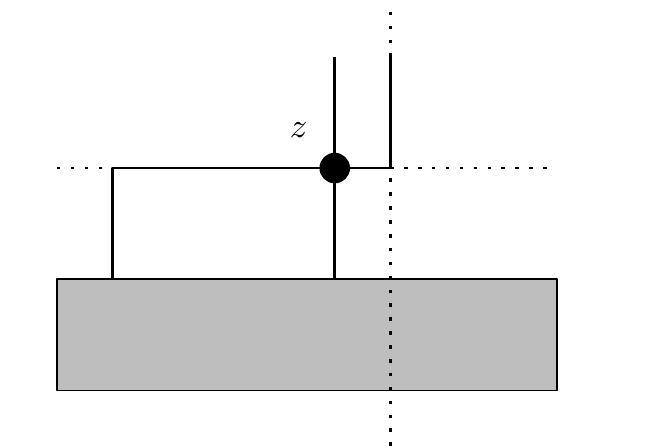}
\hspace*{\fill}
\includegraphics[page=1,width=27mm]{BK_cases.pdf}
\hspace*{\fill}
\includegraphics[page=2,width=27mm]{BK_cases.pdf}
\hspace*{\fill}
\includegraphics[page=3,width=27mm]{BK_cases.pdf}
\hspace*{\fill}
\caption{Embedding the first two vertices, and a singleton with
in-degree $2,3,4$.  Newly added grid-lines are dotted.}
\label{fig:embedBK}
\end{figure}

\paragraph{Embedding chains: } Let $V_i$ be a chain,
say $V_i=\{z_1,\dots,z_\ell\}$ with $\ell\geq 2$.  For chains, our algorithm is
substantially different from~\cite{Biedl1998}.  Only $z_1$ and $z_\ell$
have predecessors.  We place the chain-vertices on a new horizontal row
above the previous drawing, between the edges from the predecessors; see Fig.~\ref{fig:EmbeddingChains}.
We add new columns as needed to have space for new vertices and outgoing
edges without using columns that are in use for other unfinished edges.
We also use a second new row if the chain is a long chain. 

\begin{figure}[ht]
\hspace*{\fill}
\includegraphics[page=2,width=50mm]{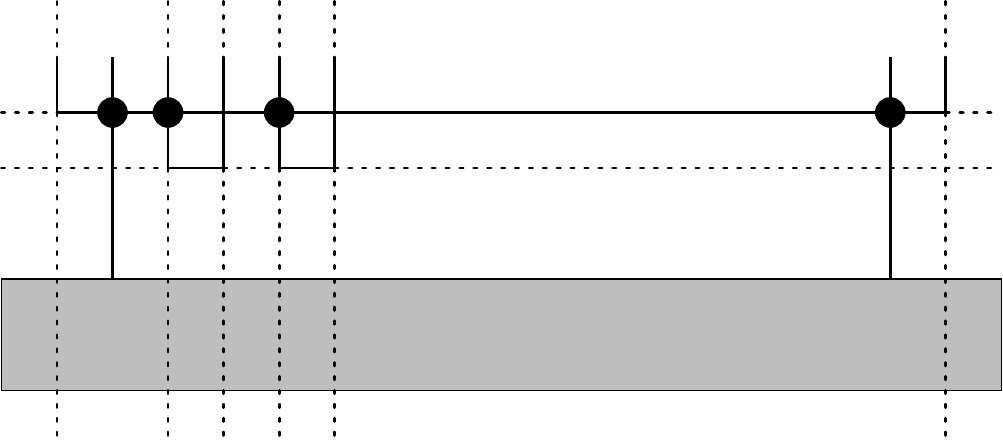}
\hspace*{\fill}
\includegraphics[page=1,width=50mm]{standardEmbeddingNew.pdf}
\hspace*{\fill}
\caption{Embedding short and long chains.} 
\label{fig:EmbeddingChains}
\end{figure}

\begin{observation}
\label{obs:increase}
The increase in the half-perimeter is as follows:
\begin{itemize}
\item For the first and last vertex-group: $O(1)$
\item For a 3-1-singleton: $+1$ (we add one row)
\item For a 2-2-singleton: $+2$ (we add one row and one column)
\item For a short chain: $+3$ (we add one row and two columns)
\item For a long chain $V_i$: $+2|V_i|$ (we add two rows and $2|V_i|-2$ columns)
\end{itemize}
\end{observation}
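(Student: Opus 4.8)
The plan is to treat the half-perimeter $H+W$ additively, tracking the number of new rows and the number of new columns contributed by each embedding step separately, and to read off both quantities from the construction rules in Fig.~\ref{fig:embedBK} and Fig.~\ref{fig:EmbeddingChains}. For the singleton cases the bound is immediate from the paragraph ``Embedding a singleton'': a singleton $z$ with $\indeg(z)\in\{2,3\}$ adds exactly one new row and $3-\indeg(z)$ new columns. For a 3-1-singleton this is one row and zero columns, i.e.\ $+1$; for a 2-2-singleton this is one row and one column, i.e.\ $+2$.

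The heart of the argument is the column count for a chain $V_i=\{z_1,\dots,z_\ell\}$. First I would count the \emph{unfinished} (upward) edges leaving the chain. By the chain-orientation of the 3-canonical order, exactly one chain-vertex has in-degree $2$ and the remaining $\ell-1$ have in-degree $1$ (Observation~\ref{obs:xij}); since $G$ is 4-regular, the total out-degree of the chain is $3\ell-1$, of which $\ell-1$ are the horizontal path-edges internal to the chain. Hence exactly $2\ell$ edges go upward and must end in pairwise distinct, otherwise-unused columns. The key accounting observation is that placing the chain \emph{finishes} the two predecessor edges of $z_1$ and $z_\ell$, so the two columns those edges occupied become free and can be reused for two of the upward edges. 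We therefore need only $2\ell-2$ genuinely new columns. Combining this with the row count (one row for a short chain, two for a long chain, as stated in the embedding description) yields $+3$ for a short chain and $+2+(2\ell-2)=2\ell=2|V_i|$ for a long chain.

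The main obstacle is to verify that the construction in Fig.~\ref{fig:EmbeddingChains} actually realizes this optimal column reuse without routing conflicts, and in particular to justify the asymmetry in the row count. The crux is that an \emph{interior} chain-vertex $z_j$ (with $1<j<\ell$) is flanked on its row by horizontal path-edges on both sides: it can send one of its two upward edges straight up in its own column, but the second upward edge cannot bend sideways on that same row without crossing a path-edge. A second row is precisely what supplies the vertical room needed to route these extra upward edges. A short chain has no interior vertex, so all four upward edges can be routed on a single row, two of them bending outward past the chain's two ends, which is why short chains save one row relative to the long-chain formula.

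Finally, the first and last vertex-groups involve only a bounded number of vertices and incident edges, and their embeddings (Fig.~\ref{fig:embedBK}, together with the two opening paragraphs) use a constant number of rows and columns regardless of $n$; hence each contributes only $O(1)$ to the half-perimeter, completing the list.
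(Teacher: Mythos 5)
Your proposal is correct and takes essentially the same route as the paper, which states Observation~\ref{obs:increase} without a separate proof because it follows by direct inspection of the embedding constructions (Figs.~\ref{fig:embedBK} and~\ref{fig:EmbeddingChains}): one row and $3-\indeg(z)$ columns per singleton, one row plus two columns per short chain, and two rows plus $2|V_i|-2$ columns per long chain. Your degree-counting derivation of the $2\ell$ unfinished upward edges and the reuse of the two predecessor columns merely makes explicit the accounting that the paper leaves to the figures, so it is the same construction and the same counts.
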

\begin{corollary}
\label{cor:HPsimple}
The half-perimeter is at most $ \frac{3}{2}n + \frac{1}{2}\x{2}{2} 
- x^\text{short} + O(1)$.
\end{corollary}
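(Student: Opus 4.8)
The plan is to sum the per-group half-perimeter increments listed in Observation~\ref{obs:increase} and then rewrite the sum using the global vertex-type counts of Observation~\ref{obs:xij}. Write $s_{22}$ and $s_{31}$ for the numbers of $2$-$2$- and $3$-$1$-singletons, let $N_\text{long}$ denote the total number of vertices lying on long chains, and recall that short chains have exactly two vertices each. Measured against a baseline of $\tfrac32$ per vertex, the increments align conveniently: a $3$-$1$-singleton costs $1=\tfrac32-\tfrac12$, a $2$-$2$-singleton costs $2=\tfrac32+\tfrac12$, a short chain costs $3=\tfrac32\cdot 2$ (exactly baseline for its two vertices), and a long chain $V_i$ costs $2|V_i|=\tfrac32|V_i|+\tfrac12|V_i|$. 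Summing over all groups and absorbing the first and last group into $O(1)$ gives
\[
H+W=\tfrac32 n-\tfrac12 s_{31}+\tfrac12 s_{22}+\tfrac12 N_\text{long}+O(1).
\]

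Next I would translate the singleton counts into $\x{2}{2}$ and the chain quantities. By Observation~\ref{obs:xij}(3) every chain contributes exactly one $2$-$2$-vertex, so $\x{2}{2}=s_{22}+x^\text{short}+x^\text{long}+O(1)$; the same item gives that each chain contributes $|V_i|-1$ to $\x{1}{3}$, so writing $N_\text{chain}=2x^\text{short}+N_\text{long}$ for the total number of chain vertices we get $\x{1}{3}=N_\text{chain}-x^\text{short}-x^\text{long}+O(1)$. Every $3$-$1$-vertex is a $3$-$1$-singleton, hence $s_{31}=\x{3}{1}$, and Observation~\ref{obs:xij}(2) gives $\x{3}{1}=\x{1}{3}+O(1)$; chaining these identities yields $s_{31}=x^\text{short}+N_\text{long}-x^\text{long}+O(1)$.

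Substituting these expressions for $s_{22}$ and $s_{31}$ into the displayed formula is the final step. The term $-\tfrac12 N_\text{long}$ coming from $s_{31}$ cancels the explicit $+\tfrac12 N_\text{long}$, the $x^\text{long}$ contributions cancel as well, and what remains of the correction to $\tfrac32 n$ is exactly $\tfrac12\x{2}{2}-x^\text{short}$. This gives $H+W=\tfrac32 n+\tfrac12\x{2}{2}-x^\text{short}+O(1)$, in fact an equality up to lower-order terms, which is slightly stronger than the stated bound.

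I expect the only real obstacle to be the bookkeeping that makes the long-chain surplus vanish: each long chain costs $\tfrac12|V_i|$ above baseline, and it is not obvious a priori that this surplus should disappear from the final bound. The cancellation works only because identifying $s_{31}$ with $\x{1}{3}$ via Observation~\ref{obs:xij}(2) charges the internal in-degree-$1$ vertices of the long chains correctly. Keeping track of the $O(1)$ contributions of $v_1,v_2,v_n$ and ensuring the $x^\text{short}$ and $x^\text{long}$ terms land with the right signs is the delicate part; everything else is routine substitution.
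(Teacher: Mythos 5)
Your proposal is correct and takes essentially the same route as the paper's proof: both sum the per-group increments of Observation~\ref{obs:increase} and convert the total via Observation~\ref{obs:xij}.2 and~\ref{obs:xij}.3 into the stated bound. The only difference is bookkeeping style---you measure each group against a $\tfrac{3}{2}$-per-vertex baseline and cancel the residuals, whereas the paper first rewrites the sum as $\x{3}{1}+2\x{2}{2}+2\x{1}{3}-x^\text{short}+O(1)$ and then rebalances using $\x{1}{3}=\x{3}{1}+\Theta(1)$; the algebra is equivalent.
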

\begin{proof}
From Observation~\ref{obs:increase} and using Observation~\ref{obs:xij}.3
the half-perimeter is at most
$\x{3}{1}+2\x{2}{2}+2\x{1}{3}-x^\text{short}+O(1)$.
By Observation~\ref{obs:xij}.2 this is
at most $\frac{3}{2}\x{3}{1}+2\x{2}{2}+\frac{3}{2}\x{1}{3}-x^\text{short}+O(1)$,
which gives the result.
\end{proof}

\begin{theorem}
Every simple 3-connected 4-graph has an orthogonal drawing of area at most 
$\frac{25}{36}n^2+O(n)\approx 0.69n^2$.
\end{theorem}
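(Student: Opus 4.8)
The drawing occupies an $H\times W$-grid, so its area is $H\cdot W$, and by the arithmetic--geometric mean inequality $H\cdot W\le\bigl(\tfrac{H+W}{2}\bigr)^2$: the area is at most a quarter of the squared half-perimeter. Since $\tfrac14\bigl(\tfrac53 n\bigr)^2=\tfrac{25}{36}n^2$, the clean plan is to feed Corollary~\ref{cor:HPsimple} into this and show that the half-perimeter is at most $\tfrac53 n+O(1)$. By Corollary~\ref{cor:HPsimple} that reduces to proving $\tfrac12\x{2}{2}-x^{\text{short}}\le\tfrac16 n+O(1)$. The tool for this is the edge-counting of Observation~\ref{obs:xij}: because $\x{0}{4}=\x{4}{0}=1$ and $\x{1}{3}=\x{3}{1}+O(1)$, summing all in/out-degree types gives $\x{2}{2}=n-2\x{1}{3}+O(1)$, and Observation~\ref{obs:xij}.3 re-expresses $\x{1}{3}$ and the chain contribution to $\x{2}{2}$ through the chain lengths. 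So the whole theorem hinges on controlling how large $\x{2}{2}$ can be relative to the number of chains.

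The step I expect to be the genuine obstacle is that, taken at face value, this half-perimeter bound is \emph{not} always $\tfrac53 n$. A $4$-regular $3$-connected graph can be made almost entirely of $2$-$2$-singletons: in the circulant $C_n(1,2)$ every interior vertex is adjacent to its two predecessors and its two successors, and the $\pm2$-edges force each interior vertex to have exactly two predecessors, so it \emph{cannot} lie in a chain and must be a $2$-$2$-singleton. There $\x{2}{2}\approx n$ and $x^{\text{short}}=0$, so Corollary~\ref{cor:HPsimple} only yields a half-perimeter near $2n$ and the crude AM--GM estimate gives just $n^2$. Hence the plan must branch on the size of $\x{2}{2}$, and the hard case is precisely the one with many $2$-$2$-singletons.

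The resolution I would pursue is that a $2$-$2$-singleton costs a row but does not truly widen the drawing: it consumes two unfinished edges and creates two, so the number of simultaneously unfinished edges---the columns that must stay active---does not grow, and a freed column can be reused above the current top of the drawing. Concretely I would prove two \emph{separate} estimates rather than one half-perimeter estimate: a height bound $H\le(\text{\#singletons})+x^{\text{short}}+2x^{\text{long}}+O(1)$ from the rows charged in Observation~\ref{obs:increase}, and a width bound in which the columns of $2$-$2$-singletons are reused, so that $W$ is governed by the peak number of unfinished edges instead of by the cumulative column count. When $\x{2}{2}$ is large the height is large but the width is correspondingly small, keeping $H\cdot W$ well under $\tfrac{25}{36}n^2$; when $\x{2}{2}$ is small, the half-perimeter is at most $\tfrac53 n+O(1)$ and the AM--GM estimate of the first paragraph finishes that regime.

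Finally I would fold the two regimes into one optimization: write $H$ and $W$ in terms of $\x{2}{2}$, $x^{\text{short}}$, $x^{\text{long}}$ and the total chain length via Observation~\ref{obs:xij}, and maximize the product $H\cdot W$ subject to these linear relations together with $|V_i|\ge3$ for long chains. I expect the maximum to sit at a balanced configuration $H=W=\tfrac56 n+O(1)$, which gives the claimed $\tfrac{25}{36}n^2+O(n)$; the last thing to verify---and the most delicate bookkeeping---is that none of the extreme configurations (all $2$-$2$-singletons, all $3$-$1$-singletons, or a single giant chain) produces a larger product once the width reuse is accounted for.
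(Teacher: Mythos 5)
Your opening reduction is exactly right, and so is your diagnosis: the paper also observes that Corollary~\ref{cor:HPsimple} alone cannot give half-perimeter $\frac{5}{3}n+O(1)$ when $\x{2}{2}$ is large, and its proof does precisely the branching you call for, splitting on whether $\x{2}{2}\leq n/3$ or $\x{2}{2}\geq n/3$. Where you diverge is in how the hard branch (many $2$-$2$-vertices) is handled, and there your proposal has a genuine gap. The paper resolves that branch with no new geometry at all: it invokes the algorithm of Papakostas and Tollis as a black box, notes that its half-perimeter is in fact $2n-\frac{1}{2}(\x{1}{3}+\x{2}{2})+O(1)$, and uses Observation~\ref{obs:xij}.2 to rewrite this as $\frac{7}{4}n-\frac{1}{4}\x{2}{2}+O(1)$, which is at most $\frac{5}{3}n+O(1)$ precisely when $\x{2}{2}\geq n/3$; so \emph{both} branches give half-perimeter $\frac{5}{3}n+O(1)$ and AM--GM finishes. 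You instead rest the hard branch on the claim that the width of the drawing can be bounded by the peak number of simultaneously unfinished edges, because a $2$-$2$-singleton ``consumes two unfinished edges and creates two'' and ``a freed column can be reused.'' That claim is not proven in your sketch, and it is exactly the crux: a freed column may lie on the wrong side of the new vertex relative to its one free horizontal port, and routing toward it cannot bend upward inside any column reserved for an unfinished edge, so whether reuse is possible depends on the geometry of ports and column positions. Making column reuse work is the entire content of the pairing technique of Papakostas--Tollis and of the paper's own Lemma~\ref{lem:grouping_works}, both of which require extensive case analysis and, even then, only guarantee saving \emph{one} grid-line per \emph{pair} of groups --- far weaker than ``width $\leq$ peak cut $+O(1)$.'' Indeed, if your width claim were true as stated, combining it with Corollary~\ref{cor:HPsimple} would yield area about $\frac{16}{25}n^2=0.64n^2$, strictly better than the theorem being proved (and better than the paper's main Theorem-proof machinery gives for this statement), which is a strong hint that it does not come for free.

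Two smaller points. First, your final step (``fold the two regimes into one optimization'') is only asserted --- you ``expect'' the maximum to sit at a balanced configuration --- so even granting the width claim, the proof is not complete as written. Second, your $C_n(1,2)$ example argues about the natural vertex order; since a $3$-canonical order is not unique (in $C_n(1,2)$ one can, e.g., try to build long chains out of vertices of one parity), it serves as motivation that branching is needed rather than as a proof that every $3$-canonical order of some graph has $\x{2}{2}\approx n$. The paper sidesteps this entirely by making its algorithm work for whatever order is computed.
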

\begin{proof}
In a nutshell, the above algorithm or the one in \cite{Papakostas1998}
gives the desired bound.  In more detail,
make the graph 4-regular, compute the 3-canonical order, and consider
the number $\x{2}{2}$ of 2-2-vertices.
\begin{enumerate}
\item If $\x{2}{2}\leq n/3$, then apply the above algorithm.  By
	Corollary~\ref{cor:HPsimple}, the half-perimeter is at 
	most $\frac{3}{2}n+\frac{1}{6}n+O(1) \leq \frac{5}{3}n + O(1)$.
\item If $\x{2}{2}\geq n/3$, apply the algorithm from~\cite{Papakostas1998}.
	They state their area bound as $0.76n^2+O(1)$, but 
	in fact their half-perimeter is at most
	$2n-\frac{1}{2}(\x{1}{3}+\x{2}{2}) + O(1)$.
	Using Observation~\ref{obs:xij}.2 and ignoring $O(1)$ terms, we have $\x{1}{3}+\x{2}{2} = \frac{1}{2}\x{1}{3} + \x{2}{2} + \frac{1}{2}\x{3}{1} = \frac{1}{2}n+\frac{1}{2}\x{2}{2}$.
	Hence the half-perimeter of their algorithm is at most $\frac{7}{4}n-\frac{1}{4}\x{2}{2}+O(1)
	\leq (\frac{7}{4}-\frac{1}{12}) n +O(1) = \frac{5}{3}n+O(1)$.
\end{enumerate}
So either way, we get a drawing with half-perimeter $\frac{5}{3}n + O(1)$.
The area of it is maximal if the two sides are equally large and thus at most $(\frac{5}{6}n+O(1))^2$.
\end{proof}

\subsection{Improvement via pairing}

We already know a bound of $\frac{3}{2}n+\frac{1}{2}\x{2}{2} - x^\text{short}+O(1)$
on the half-perimeter. This section improves this further to half-perimeter
$\frac{3}{2}n+O(1)$. 
The idea is strongly inspired by the pairing technique of Papakostas and Tollis~\cite{Papakostas1998}. They created pairs of vertices with
special properties such that at least $\frac{1}{2}(\x{2}{2}+\x{1}{3})$ 
such pairs must exist. For each pair they
can save at least one grid-line, compared to the $2n+O(1)$ grid-lines
created with~\cite{Biedl1998}.

Our approach is similar, but instead of pairing vertices, we pair groups of
the canonical order
by scanning them in backward order as follows:
\begin{enumerate}
\item Initialize $i:=k-1$.  (We ignore the last group, which is a 4-0-singleton.)
\item While $V_i$ is a 3-1-singleton and $i>2$, set $i:=i-1$.
\item If $i=2$, break.  Else $V_i$ is a chain or a 2-2-singleton and
	we choose the partner of $V_i$ as follows:
Initialize $j:=i-1$.
While $V_j$ is a 3-1-singleton whose successor is not in $V_i$, set $j:=j-1$.
Now pair $V_i$ with $V_j$.  
	Observe that such a $V_j$ with $j\geq 2$ always exists since $i>2$ and $V_2$
	is not a 3-1-singleton.
\item Update $i=j-1$ and repeat from Step (2) onwards.
\end{enumerate}

In the small example in Fig.~\ref{fig:example}, the 2-2-singleton $V_5$
gets paired with the short chain $V_4$, and all other groups are not paired.

Observe that with the possible exception of $V_2$, every chain is paired
and every 2-2-vertex is in a paired group (either as 2-2-singleton or as part
of a chain). Hence, there are at least $\frac{1}{2}(\x{2}{2}-1)$ pairs. 
The key observation is the following:

\begin{lemma}
\label{lem:grouping_works}
Let $V_i,V_j$ be two vertex groups that are paired. Then there exists
a method of drawing $V_i$ and $V_j$ (without affecting the layout of
any other vertices) such that the increase to rows and columns is
at most $2|V_i\cup V_j|-1$.
\end{lemma}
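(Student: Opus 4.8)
The plan is to do a cheap counting reduction first and then concentrate all of the geometric work on a single case. Recall from Observation~\ref{obs:increase} that a 3-1-singleton and a short chain each cost $2|V|-1$ grid lines, whereas a 2-2-singleton and a long chain each cost $2|V|$; call the former groups \emph{light} and the latter \emph{heavy}. Since the two groups of a pair are vertex-disjoint we have $|V_i\cup V_j|=|V_i|+|V_j|$, and summing the individual default costs of Observation~\ref{obs:increase} gives at most $2|V_i\cup V_j|-1$ as soon as at least one of $V_i,V_j$ is light (one light plus one heavy meets the bound exactly, two light beat it by one). Hence, whenever either group is light I would simply draw both groups by the unpaired embeddings of the previous subsection and be finished. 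This isolates the only interesting situation: both $V_i$ and $V_j$ are heavy, the two default costs sum to exactly $2|V_i\cup V_j|$, and I must save precisely one row or column.

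For this heavy--heavy case the target is a combined drawing of the lower group $V_j$ and the upper group $V_i$ using one fewer grid line than the two separate embeddings. I expect the saving to be a \emph{column} rather than a row, since each heavy group forces two rows (a long chain) or one unavoidable row (a 2-2-singleton) but spends its ``extra'' cost on an outgoing-edge column, and column-reuse is the only mechanism that applies uniformly to both singletons and chains. The natural candidate is the column that is freed within the pair when the relevant heavy group consumes one of its incoming edges (for a 2-2-singleton, the column of its second, now-terminated incoming edge; for a long chain, the analogous freed column at the end where its predecessor enters). The plan is to draw $V_j$ essentially as usual while reserving one such freed column, and then, when $V_i$ is placed, route $V_i$'s extra outgoing edge up this reserved column instead of opening a fresh one, modifying the standard embeddings of Fig.~\ref{fig:embedBK} and Fig.~\ref{fig:EmbeddingChains} only locally. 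The pairing rule is exactly what keeps this column usable: every group strictly between $V_j$ and $V_i$ is a 3-1-singleton whose unique outgoing edge avoids $V_i$, so these groups neither route an edge into $V_i$ nor consume the reserved column, and each of them adds a row but no column (indeed only frees columns), so the reserved column survives unobstructed up to the row of $V_i$.

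The main obstacle is the geometric verification, which I would carry out over the four heavy--heavy subcases (2-2-singleton or long chain for each of $V_i$ and $V_j$) and, for the chain cases, over which end of the chain the reused column meets. For each subcase I would exhibit the explicit routing and check three things: that the reserved column is genuinely free from $V_j$'s row all the way up to $V_i$'s row, that $V_i$'s extra edge reaches it with no more than the usual number of bends and without crossing a parked unfinished edge, and that no other vertex's placement is disturbed, so that distinct pairs can be processed independently as the pairing procedure requires. The delicate point is showing that the reserved column stays adjacent to---or at least bend-reachable from---$V_i$ after the intermediate 3-1-singletons have shifted the parked edges; here the hypothesis that each such singleton sends no edge into $V_i$ is precisely what forbids it from consuming or trapping the reserved column, which is the reason the pairing rule skips exactly those groups. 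Once this adjacency is established in every subcase, the single reused column yields the bound $2|V_i\cup V_j|-1$.
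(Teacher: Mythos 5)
Your opening reduction is arithmetically sound for the bound as literally stated: if one of the two groups is a short chain or a 3-1-singleton, the default costs of Observation~\ref{obs:increase} already sum to at most $2|V_i\cup V_j|-1$. For short chains this is exactly what the paper does (its Case~1). For 3-1-singletons, however, it gives away precisely the saving the paper is after: in the paper's Case~2 the 3-1-singleton $V_j$ frees two columns and $V_i$ re-uses one of them, so the pair costs $2|V_i|$, one less than your $2|V_i|+1$. That stronger bound is what the theorem's accounting actually consumes --- the identity $2|V_i\cup V_j|-1=2\x{2}{2}_{V_i\cup V_j}+2\x{1}{3}_{V_i\cup V_j}-1$ used in the main text is false when the pair contains a 3-1-vertex --- so with your treatment a graph whose pairs all involve 3-1-singletons would realize no savings at all, and the half-perimeter bound $\frac{3}{2}n+O(1)$ would not follow. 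In other words, your shortcut is consistent with the lemma's wording but defeats the lemma's purpose.

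The decisive gap is in your heavy--heavy case. Your single mechanism --- reserve a column freed by the lower group $V_j$ and route one outgoing edge of $V_i$ into it --- requires that $V_j$ free a column, and this is simply false when $V_j$ is a long chain: a chain with $\ell\geq 3$ vertices has $2\ell$ outgoing edges and occupies exactly $2\ell$ columns (its two predecessor columns are re-used by upward edges of its end vertices, plus $2\ell-2$ new ones), so no column is ever freed, and your ``analogous freed column at the end where its predecessor enters'' does not exist. Consequently the mechanism cannot handle two paired long chains (the paper's Case~3, solved by letting both chains share the \emph{detour row}, Fig.~\ref{fig:long-chains-independent}), nor a long chain $V_j$ adjacent to a 2-2-singleton $V_i$ (the paper's Case~5, where role-swapping is impossible because of adjacency and five separate rearrangement constructions are needed, e.g.\ re-splitting $V_i\cup V_j$ into two chains, Figs.~\ref{fig:2-2-single-long-edges}--\ref{fig:2-2-single-long-edges2}). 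Even when $V_j$ is a 2-2-singleton, so a freed column does exist, the reuse fails when that column lies strictly between the two predecessor columns of a 2-2-singleton $V_i$: any horizontal segment from $V_i$ toward it would overlap $V_i$'s own incoming edge along $V_i$'s row, which is exactly why the paper's Case~4(c) resorts to exchanging the roles of the two groups or to row-sharing with a long chain. Your explicit design decision that the saving should be ``a column rather than a row'' is therefore backwards for several mandatory cases, and the ``geometric verification'' you defer is not a routine check of one construction --- it is the entire content of the lemma, and the one construction you commit to provably does not cover it.
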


We defer the (lengthy) proof of Lemma~\ref{lem:grouping_works} 
to the next section, and study here first its consequences.
We can draw $V_1$ and $V_k$ using $O(1)$ grid-lines.  We can draw $V_2$
using $2|V_2|=2\x{2}{2}_{V_2}+2\x{1}{3}_{V_2}$ new grid-lines, 
where $\x{\ell}{k}_W$ denotes the number of vertices of in-degree $\ell$
and out-degree $k$ in vertex set $W$.  We can draw
any unpaired 3-1-singleton using one new grid-line. Finally,
we can draw each pair using
$ 2|V_i\cup V_j|-1=2\x{2}{2}_{V_i\cup V_j} + 2\x{1}{3}_{V_i\cup V_j} - 1$ new grid-lines.
This covers all vertices, since all 2-2-singletons and all chains 
belong to pairs or are $V_2$, and since there are no 1-3-singletons.

Putting it all together and using Observation~\ref{obs:xij}.2, 
the number of grid-lines hence is
$2\x{1}{3}+2\x{2}{2}+\x{3}{1}-\#\mbox{pairs}+O(1)
\leq 2\x{1}{3}+\frac{3}{2}\x{2}{2}+\x{3}{1}+O(1)
= \frac{3}{2}n + O(1)$ as desired.
Since a drawing with half-perimeter $\frac{3}{2}n$ has
area at most $(\frac{3}{4}n)^2=\frac{9}{16}n^2$, we can conclude:

\begin{theorem}
\label{thm:main}
Every simple 3-connected 4-graph has an orthogonal drawing of area at most 
$\frac{9}{16}n^2+O(n)\approx 0.56n^2$. 
\end{theorem}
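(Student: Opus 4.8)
The plan is to assemble the area bound directly from the pairing machinery, taking Lemma~\ref{lem:grouping_works} as given. First I would reduce to the 4-regular case: by Lemma~\ref{lem:4regular} I may add edges to obtain a 3-connected, 4-regular supergraph $G'$ with at most one double edge, compute a 3-canonical order of $G'$ with $V_k=\{v_n\}$ and the double edge chosen as $v_1 v_n$, and draw $G'$; deleting the added edges afterwards can only shrink the drawing, so it suffices to bound the half-perimeter of the drawing of $G'$.

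Next I would run the backward pairing procedure and establish the combinatorial accounting that every chain other than possibly $V_2$ is paired and every $2$-$2$-vertex lies in a paired group (either as a $2$-$2$-singleton or inside a paired chain). Since each paired group---be it a chain or a $2$-$2$-singleton---contributes exactly one to $\x{2}{2}$ by Observation~\ref{obs:xij}.3, a pair accounts for two units of $\x{2}{2}$, so there are at least $\frac{1}{2}(\x{2}{2}-1)$ pairs. I would then tally the new grid-lines group by group: $O(1)$ for $V_1$ and $V_k$; $2|V_2|$ for $V_2$; one per unpaired $3$-$1$-singleton; and, crucially, only $2|V_i\cup V_j|-1$ per pair $\{V_i,V_j\}$ by Lemma~\ref{lem:grouping_works}. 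Because there are no $1$-$3$-singletons and every chain and $2$-$2$-singleton is either $V_2$ or paired, this partition covers all vertices exactly once.

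Summing the per-group contributions and writing each as $2\x{2}{2}_W+2\x{1}{3}_W$ over the relevant vertex set $W$, the total number of grid-lines is $2\x{1}{3}+2\x{2}{2}+\x{3}{1}-\#\text{pairs}+O(1)$, where the single line saved on each pair produces the $-\#\text{pairs}$ term. Substituting $\#\text{pairs}\geq\frac{1}{2}(\x{2}{2}-1)$ turns $2\x{2}{2}-\#\text{pairs}$ into $\frac{3}{2}\x{2}{2}+O(1)$, and then Observation~\ref{obs:xij}.2 (so $\x{1}{3}=\x{3}{1}+O(1)$) makes the $\x{2}{2}$ terms cancel, collapsing the bound to $\frac{3}{2}n+O(1)$. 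Finally, among all grids of a fixed half-perimeter $W+H$ the area $W\cdot H$ is maximized when $W=H$; hence the area is at most $\bigl(\tfrac{3}{4}n+O(1)\bigr)^2=\tfrac{9}{16}n^2+O(n)$.

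The real difficulty lies not in this bookkeeping but in Lemma~\ref{lem:grouping_works}: producing, for each paired $\{V_i,V_j\}$, an explicit local redrawing that reuses one row or column shared between the two groups while leaving every other vertex untouched and keeping all unfinished edges in free columns above the drawing. I would expect the bulk of the work---and in particular the case analysis over short chains, long chains, and $2$-$2$-singletons together with their possible partners---to reside there, which is why the paper defers it.
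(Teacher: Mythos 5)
Your proposal is correct and takes essentially the same route as the paper: reduce to the 4-regular case via Lemma~\ref{lem:4regular}, run the backward pairing, invoke Lemma~\ref{lem:grouping_works} for the $2|V_i\cup V_j|-1$ bound per pair, tally $2\x{1}{3}+2\x{2}{2}+\x{3}{1}-\#\mbox{pairs}+O(1)$ grid-lines, and collapse this to $\frac{3}{2}n+O(1)$ via Observation~\ref{obs:xij}.2 before maximizing the area at equal width and height. The only slip is your claim that every paired group is a chain or a 2-2-singleton contributing exactly one to $\x{2}{2}$ --- the partner $V_j$ may in fact be a 3-1-singleton (one whose unique successor lies in $V_i$), so a pair may contain just one 2-2-vertex; but since each group contains \emph{at most} one 2-2-vertex, the lower bound $\#\mbox{pairs}\geq\frac{1}{2}(\x{2}{2}-1)$ that you actually use still follows.
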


We briefly discuss the run-time.  The 3-canonical
order can be found in linear time.  Most steps
of the drawing algorithm work in constant time per vertex, hence
$O(n)$ time total.  One difficulty is that to place a group we must
know the relative order of the columns of the edges from the
predecessors.  As discussed extensively in \cite{Biedl1998}, we
can do this either by storing columns as a balanced binary search
tree (which uses $O(\log n)$ time per vertex-addition), or using
the data structure by Dietz and Sleator \cite{DS87} which allows
to find the order in $O(1)$ time per vertex-addition. Thus,
the worst-case run-time to find the drawing is $O(n)$.

\section{Proof of Lemma~\protect\ref{lem:grouping_works}}

Recall that we must show that two paired vertex groups $V_j$ and $V_i$,
with $i<j$, can be embedded such that we use at most
$2|V_i|+2|V_j|-1$ new grid-lines.  The proof of this is a
massive case analysis, depending on which type of group
$V_i$ and $V_j$ are, and whether there are edges between them or not.%
\footnote{The constructions we give have been designed as to keep the
description simple; often even more grid-lines could be saved by doing
more complicated constructions.%
}  We first observe some properties of pairs.

\begin{observation} 
\label{obs:pairing}
By choice of the pairing, the following holds:
\begin{enumerate}
\item For any pair $(V_i,V_j)$ for $j<i$, group $V_i$ is either a 2-2-singleton
	or a chain.
\item If $V_i$ is paired with $V_j$ for $j<i$, then all predecessors of $V_i$ are in $V_j$ or occurred in a group before $V_j$.
\end{enumerate}
\end{observation}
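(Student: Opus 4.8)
The plan is to verify both claims by directly unwinding the pairing procedure of Section~3.2; no new construction is needed, so the whole argument is a matter of tracking which groups the loops visit and which they skip. Throughout I would rely on the classification of intermediate groups: for $1<i<k$ every group $V_i$ is a chain or a singleton, and a singleton has $\indeg\geq 2$, hence is a 2-2-, a 3-1-, or a 4-0-singleton; by Observation~\ref{obs:xij}.1 the only 4-0-singleton is $V_k=\{v_n\}$, so for $1<i<k$ a singleton must be a 2-2- or a 3-1-singleton.

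For the first claim, I would trace how the ``main'' group $V_i$ of a pair is produced. The outer scan starts at $i=k-1$, so $V_k$ is never chosen; the while-loop in Step~(2) decrements $i$ as long as $V_i$ is a 3-1-singleton, and we reach Step~(3) (where $V_i$ receives a partner) only once this loop halts with $i>2$. At that moment $V_i$ is not a 3-1-singleton, and by the classification above together with $2<i\leq k-1$ it cannot be $V_1$, $V_k$, or a 4-0-singleton; the only remaining possibilities are a 2-2-singleton or a chain, which is exactly claim~1.

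The second claim is the substantive one, and here the key is the exact stopping condition of the inner while-loop in Step~(3). That loop starts at $j=i-1$ and decrements $j$ precisely while $V_j$ is a 3-1-singleton whose unique successor does not lie in $V_i$; hence every group $V_h$ strictly between the final $V_j$ and $V_i$ (that is, $j<h<i$) is a 3-1-singleton whose successor is not in $V_i$. I would then argue that such a $V_h$ contains no predecessor of $V_i$: its single vertex has $\outdeg=1$, so it has exactly one successor, and by the loop condition that successor avoids $V_i$; therefore no edge is oriented from $V_h$ into $V_i$, so the vertex in $V_h$ is not a predecessor of any vertex of $V_i$. Consequently every predecessor of $V_i$ must lie in $V_j$ itself or in a group with index below $j$, which is exactly claim~2.

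I do not expect a genuine obstacle here, since the statement is a bookkeeping consequence of the algorithm. The one place demanding a little care is the edge-orientation in claim~2: one must note that a predecessor of $V_i$ requires an edge oriented \emph{into} $V_i$, i.e.\ a \emph{successor} of the earlier group lying in $V_i$, so that having out-degree one with that successor outside $V_i$ is genuinely enough to rule the group out. Establishing the group-classification at the outset (and in particular invoking $\x{4}{0}=1$ to locate the lone 4-0-singleton at $V_k$) is the other step worth stating explicitly, since claim~1 rests on it.
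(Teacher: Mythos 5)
Your proof is correct and follows exactly the intended justification: the paper states this as an observation ``by choice of the pairing'' with no written proof, since claim~1 is built into Step~(3) of the pairing procedure (the outer loop skips 3-1-singletons, and intermediate singletons must have in-degree at least~2 and a successor, hence are 2-2 or 3-1) and claim~2 follows from the inner loop's stopping condition just as you argue. Your careful handling of the orientation point---that a skipped 3-1-singleton's unique successor lying outside $V_i$ rules it out as a predecessor of $V_i$---is precisely the bookkeeping the paper leaves implicit.
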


The following notation will cut down the number of cases a bit.
We say that groups $V_i$ and $V_j$ are {\em adjacent} if there
is an edge from a vertex in one to a vertex in the other.
If two paired groups $V_i,V_j$ are not adjacent, then by Observation~\ref{obs:pairing}.2
all predecessors of $V_i$ occur before group $V_j$.  We hence can safely draw
$V_i$ first, and then draw $V_j$, thereby effectively exchanging the roles
of $V_i$ and $V_j$ in the pair.  Now we distinguish cases:

\begin{enumerate}
\item One of $V_i$ and $V_j$ is a short chain.  
Say $V_i$ is the short chain,
the other case is similar.  Recall that the standard layout for
a short chain uses 3 new grid-lines, but $\x{2}{2}_{V_i}+\x{1}{3}_{V_i}=2$.
So the layout of a short chain automatically saves one grid-line.
We do not change the algorithm at all in this case; laying out $V_i$
and $V_j$ exactly as before results in at most
$ 2\x{2}{2}_{V_i\cup V_j} + 2\x{1}{3}_{V_i\cup V_j} - 1$ new grid-lines.
(This is what happens in the example of Fig.~\ref{fig:example}.)

\item One of $V_i$ and $V_j$ is a 3-1-singleton.  
By Observation~\ref{obs:pairing}, the 3-1-singleton must be $V_j$.
By the pairing algorithm, the unique outgoing edge of the 3-1-singleton
must lead to $V_i$.  Draw $V_j$ as before.
We can then draw $V_i$ so that it re-uses one of the columns that were
freed by $V_j$. See Fig.~\ref{fig:3-1}.

\begin{figure}[ht]
\hspace*{\fill}
\includegraphics[page=1,height=20mm]{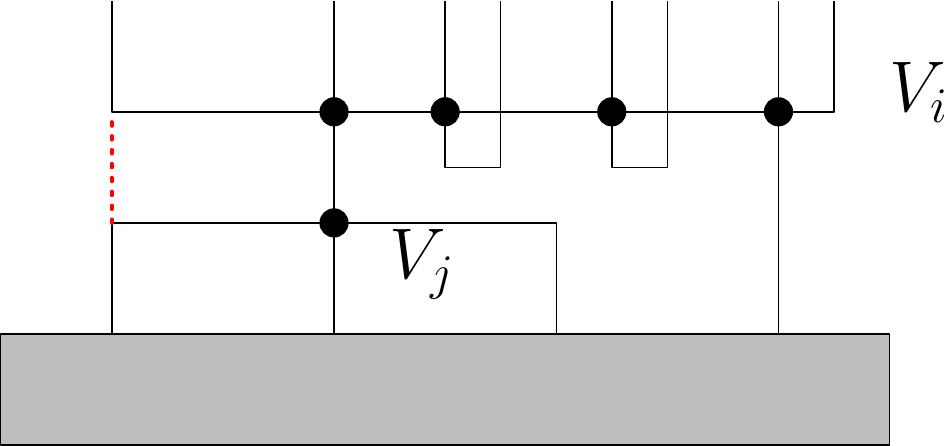}
%\hspace*{\fill}
%\includegraphics[page=2,height=20mm]{3-1.pdf}
\hspace*{\fill}
\includegraphics[page=3,height=20mm]{3-1.pdf}
\hspace*{\fill}
\caption{Reusing the column freed by a 3-1-singleton with a later
chain or singleton.  In this and the following figures, the
re-used grid-line is dotted.}
\label{fig:3-1}
\end{figure}

\item $V_i$ and $V_j$ are both long chains.
In this case, both $V_i$ and $V_j$ can use the same extra row
for the ``detours'' that their {\em middle} vertices 
(by which we mean  vertices that are neither the first nor the last vertex
of the chain)
use.  Since we can freely choose into which columns these
middle vertices are placed, we can ensure that none of these
``detours'' overlap and hence one row suffices for both chains.
This holds even if one or both of the predecessors of $V_i$ are 
in $V_j$, as these are distinct and the two corresponding incoming edges of $V_i$ 
extend the edges that were already drawn for $V_j$.
See Fig.~\ref{fig:long-chains-independent}.

\begin{figure}[ht]
\hspace*{\fill}
\includegraphics[page=2,height=20mm]{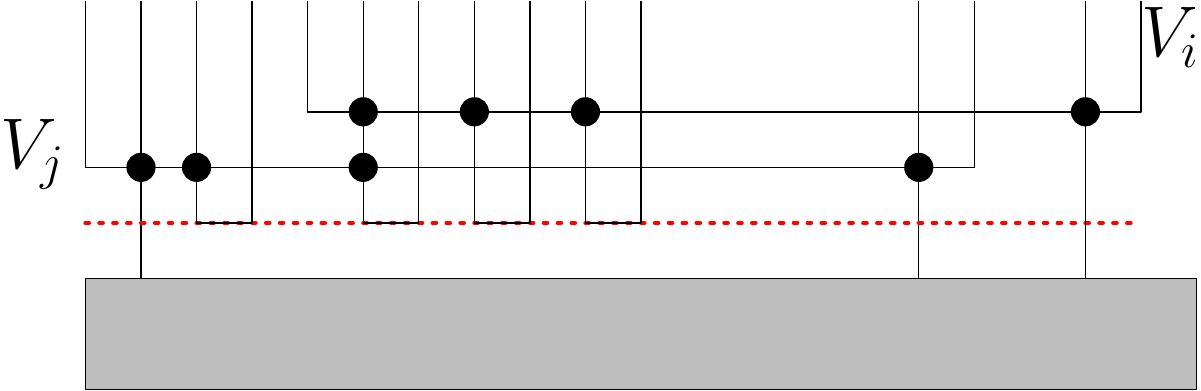}
\hspace*{\fill}
\includegraphics[page=1,height=20mm]{long-chains-independent.pdf}
\hspace*{\fill}
\\[2ex]
\hspace*{\fill}
\includegraphics[page=3,height=20mm]{long-chains-independent.pdf}
\hspace*{\fill}
\caption{Sharing the extra row between two long chains when there are 0, 1 or 2 predecessors of $V_i$ in $V_j$.} 
\label{fig:long-chains-independent}
\end{figure}

\item None of the previous cases applies and $V_j$ is a 2-2-singleton.  
By Observation~\ref{obs:pairing}.1 and since Case~(1) does not apply, $V_i$ is either a 2-2-singleton or a long chain.
There are two columns reserved for edges from predecessors of $V_j$.
Since predecessors of $V_i$ are distinct, at most one of them can be
the 2-2-singleton in $V_j$.  So there also is at least one column
reserved for an edge from a predecessor of $V_i$ not in $V_j$.
We call these three or four columns the {\em predecessor-columns}. 
We have sub-cases depending on the relative location of these columns:

\begin{enumerate}
\item The leftmost predecessor-column leads to $V_j$.
In this case we save a column almost exactly as in~\cite{Papakostas1998}.  
Place $V_j$ as before, in the right one of its predecessor-columns.
This leaves the leftmost predecessor-column free to be reused.  Now
no matter whether $V_i$ is a 2-2-singleton or a long chain, or whether
$V_i$ is adjacent to $V_j$ or not, 
we can re-use this leftmost column for one outgoing edge of $V_i$ with a 
suitable placement. See Fig.~\ref{fig:2-2-single-independent}.

\begin{figure}[ht]
\hspace*{\fill}
\includegraphics[page=2,height=20mm]{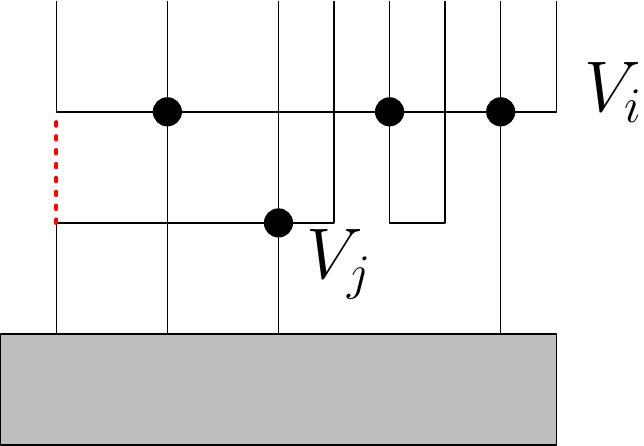}
\hspace*{\fill}
\includegraphics[page=1,height=20mm]{2-2-Vj-leftmost.pdf}
\hspace*{\fill}
\includegraphics[page=4,height=20mm]{2-2-Vj-leftmost.pdf}
\hspace*{\fill}
\includegraphics[page=3,height=20mm]{2-2-Vj-leftmost.pdf}
\hspace*{-2mm}
\caption{Reusing the leftmost predecessor-column freed by a 2-2-singleton $V_j$ in Case~4(a). Left two pictures: $V_i$ not adjacent to $V_j$. Right two: $V_i$ adjacent to $V_j$.}
\label{fig:2-2-single-independent}
\end{figure}

\item The rightmost predecessor-column leads to $V_j$.  This case is symmetric to the previous one.

\item The leftmost and rightmost predecessor-columns lead to $V_i$.
	This implies that $V_i$ has two predecessors not in $V_j$,
	hence $V_i$ cannot be adjacent to $V_j$.  If $V_i$ is a
	2-2-singleton, then (as discussed earlier) we can exchange
	the roles of $V_i$ and $V_j$, which brings us to Case 4(a).
	If $V_i$ is a long chain, then 
place $V_j$ in the standard fashion. We then place the long
chain $V_i$ such that the ``detours'' of the middle vertices 
re-use the row of $V_j$.
See Fig.~\ref{fig:2-2-single-long-independent}.

\begin{figure}[ht]
\hspace*{\fill}
\includegraphics[page=4,height=20mm]{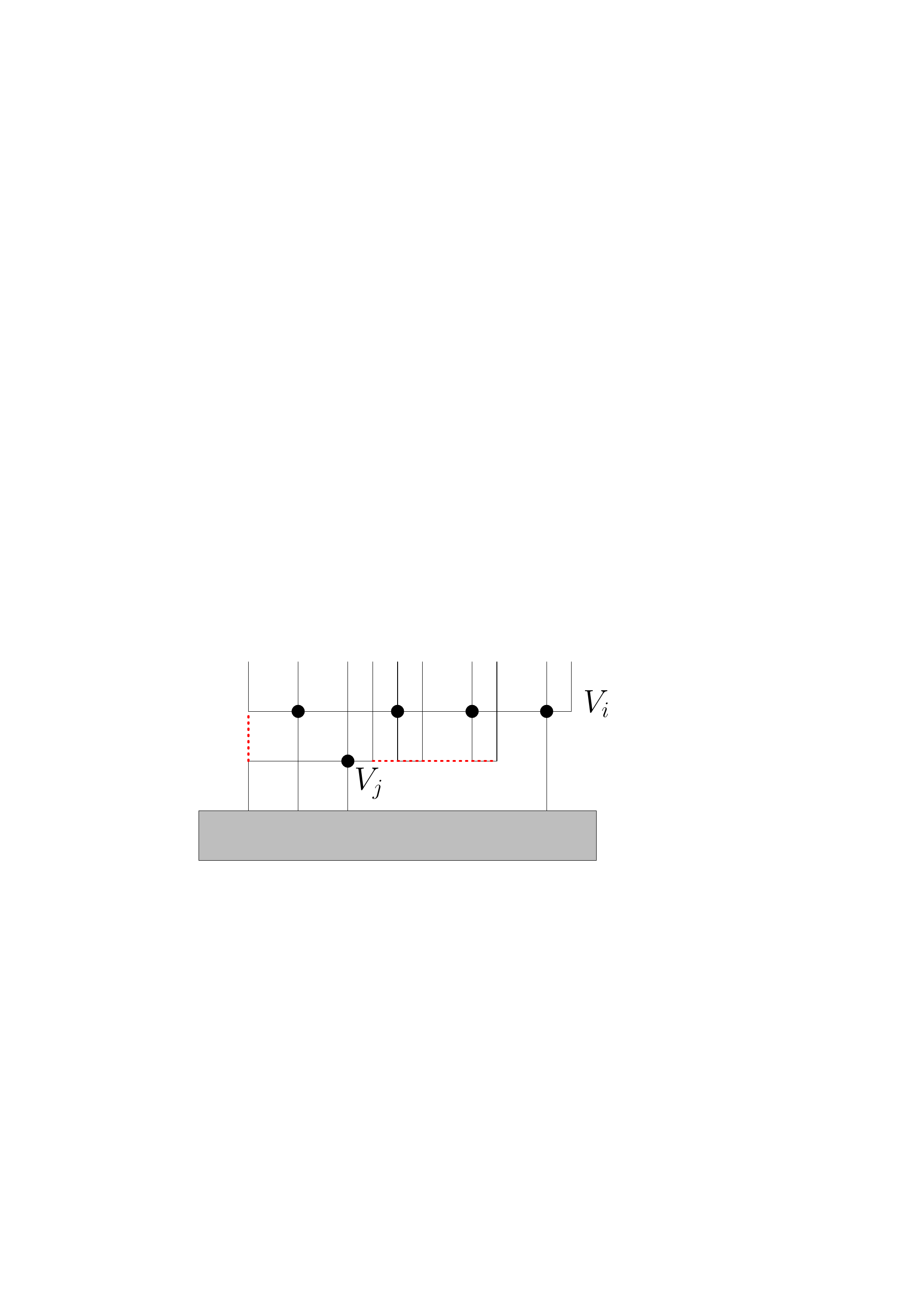}
\hspace*{\fill}
\includegraphics[page=3,height=20mm]{2-2-single-long-independent.pdf}
\hspace*{\fill}
\caption{
If the predecessor-columns of $V_j$ are between the ones of $V_i$,
then we can either revert to Case 4(a) or the long chain $V_i$ can re-use the row of $V_j$.}
\label{fig:2-2-single-long-independent}
\end{figure}
\end{enumerate}

\item None of the previous cases applies and $V_j$ is a chain.
Say $V_j=\{z_1,\dots,z_\ell\}$, where $\ell\geq 3$ since
Case (1) does not apply.  We assume the naming is such that
the predecessor column of $z_1$ is left of the predecessor column
of $z_\ell$.

Since we are not in a previous case, $V_i$ must be a 2-2-singleton, say $z$.
If $V_i$ is not adjacent to
$V_j$, then we can again exchange the roles of $V_i$ and $V_j$
and are in Case~(4).
So we may assume that there are edges between $V_j$
and $V_i$.  We distinguish the following sub-cases depending on how
many such edges there are and whether their ends are middle vertices.

\begin{enumerate}
\item $z$ has exactly one neighbor in $V_j$, and it is either $z_1$ or $z_\ell$.
We rearrange $V_i\cup V_j$ into two different chains.
Let $z$ be adjacent to $z_1$ (the other case is symmetric). Then $\{z,z_1\}$ forms one chain
and $\{z_2,\dots,z_\ell\}$ forms another.  Embed these two chains
as usual.  Since $\{z,z_1\}$ forms a short chain, this saves one grid-line. %two if $\{z_2,\dots,z_\ell\}$ is short
See Fig.~\ref{fig:2-2-single-long-edges}(left).

\item $z$ has exactly one neighbor in $V_j$, and it is $z_h$ for some $1<h<\ell$.
Embed the chain $V_j$ as usual, but omit the new column
next to $z_h$. For embedding $z$, we place a new row {\em below} the rows for the chain; 
using this new row we can connect the bottom outgoing edge of $z_h$
to the horizontal incoming edge of $z$.
See Fig.~\ref{fig:2-2-single-long-edges}(right).

\begin{figure}[ht]
\hspace*{\fill}
\includegraphics[page=1,height=20mm]{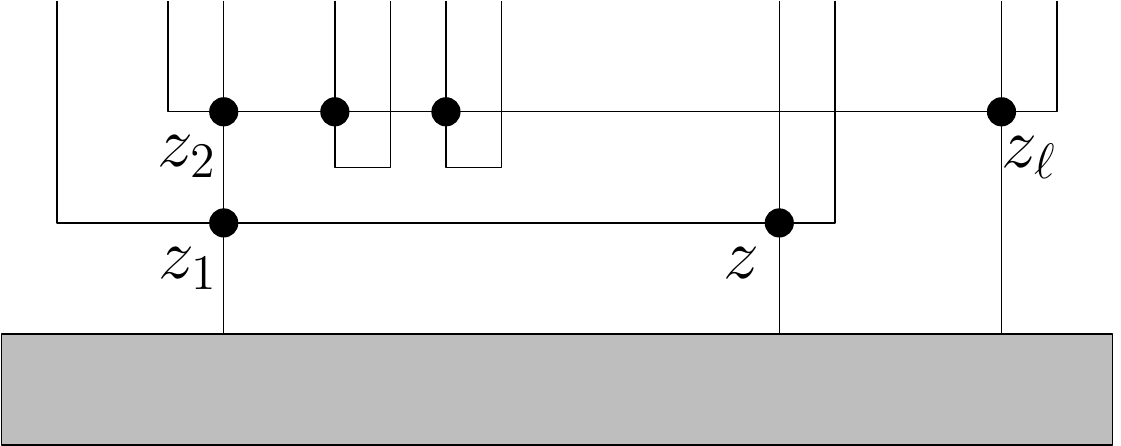}
\hspace*{\fill}
\includegraphics[page=2,height=20mm]{2-2-single-long-edges.pdf}
\hspace*{\fill}
\caption{$V_j$ is a long chain, $V_i$ is a 2-2-singleton with
one predecessor in $V_j$.}
\label{fig:2-2-single-long-edges}
\end{figure}

\item $z$ has two neighbors in $V_j$, and both of them are middle
	vertices $z_{g},z_{h}$ for $1<g<h<\ell$.
Embed the chain $V_j$ as usual, but omit the new columns
next to $z_g$ and $z_h$.  Place a new row {\em between} the two rows for 
the chain and use it to connect the two bottom outgoing edges 
of $z_g$ and $z_h$ to place
$z$, re-using the row for the detours
to place the bottom outgoing edge of $z$.  This uses
an extra column for $z$, but 
saved two columns at $z_g$ and $z_h$, so overall one grid-line has been
saved. See Fig.~\ref{fig:2-2-single-long-edges2}(top left).

\item $z$ is adjacent to $z_1$ and $z_2$ (the case of adjacency to
	$z_{\ell-1}$ and $z_\ell$ is symmetric).
Embed $z_2,\dots,z_\ell$ as usual for a chain, then place $z_1$ below $z_2$.
The horizontally outgoing edge of $z_2$ intersects one outgoing
edge of $z_1$; put $z$ at this place to save a row and a column. See Fig.~\ref{fig:2-2-single-long-edges2}(top right).

\item $z$ is adjacent to $z_1$ and $z_h$ with $h>2$ (the
	case of adjacency to $z_\ell$ and $z_h$ with $h<\ell-1$ is symmetric).
Draw the chain $V_j$ with the modification that $z_h$ is {\em below}
$z_{h-1}$, but still all middle vertices use the same extra row for
their downward outgoing edges.  This uses 3 rows, but now $z$ can
be placed using the two left outgoing edges of $z_1$ and $z_h$,
saving a row for $z$ and a column for the left outgoing edge of $z_h$.
See Fig.~\ref{fig:2-2-single-long-edges2-more}(bottom), both for $h<\ell$ and $h=\ell$.
\end{enumerate}
\end{enumerate}
This ends the proof of Lemma~\ref{lem:grouping_works} and hence
shows Theorem~\ref{thm:main}.

\begin{figure}[ht]
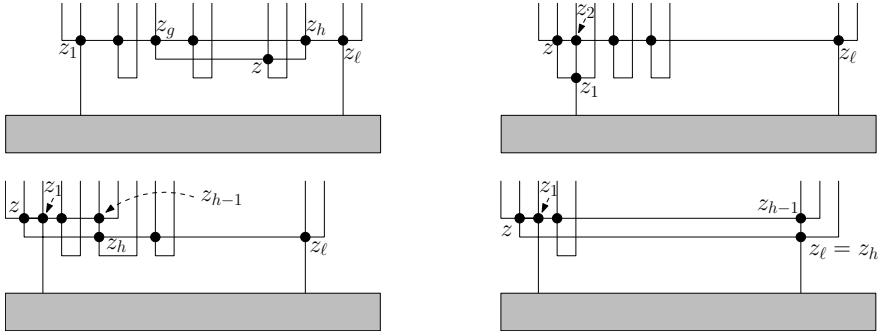

\hspace*{\fill}
\includegraphics[page=5,height=20mm]{2-2-single-long-edges.pdf}
\hspace*{\fill}
\includegraphics[page=6,height=20mm]{2-2-single-long-edges.pdf}
%\hspace*{\fill}
%\includegraphics[page=6,height=20mm]{2-2-single-long-edges.pdf}
\hspace*{\fill}
\\[2ex]
\hspace*{\fill}
\includegraphics[page=3,height=20mm]{2-2-single-long-edges.pdf}
\hspace*{\fill}
\includegraphics[page=4,height=20mm]{2-2-single-long-edges.pdf}
\hspace*{\fill}
\caption{$V_j$ is a long chain, $V_i$ is a 2-2-singleton,
and there are exactly two edges between them.  (Top) Cases 5(c) and (d).
(Bottom) Case 5(e).}
\label{fig:2-2-single-long-edges2}
\label{fig:2-2-single-long-edges2-more}
\end{figure}

\section{Conclusion}

In this paper, we gave an algorithm to create an orthogonal drawing
of a 3-connected 4-graph that has area at most
$\frac{9}{16}n^2+O(n)\approx 0.56n^2$. As a main tool we used the 3-canonical order for arbitrary 3-connected graphs, whose existence 
was long known but only recently made efficient. To our knowledge, this is the first application of the 3-canonical
order on non-planar graphs in graph-drawing.
%
%\clearpage
%\medskip\noindent
Among the many remaining open problems are the following:
\begin{itemize}
\item Can we draw 2-connected graphs with area less than
	$0.76n^2$?  A natural approach would be to draw each
	3-connected component with area $0.56n^2$ and to merge
	them suitably, but there are many cases depending on how
	the cut-vertices and virtual edges are drawn, and so this
	is far from trivial.
\item Can we draw 3-connected graphs with $(2-\varepsilon)n$ bends,
	for some $\varepsilon>0$?  With an entirely different algorithm
	(not given here), we have been able to prove a bound of
	$2n-\x{2}{2}+O(1)$ bends, so an improved bound seems likely.
\item Our algorithm was strongly inspired by the one of Kant \cite{Kant1996}
	for 3-connected planar graphs.  Are there other graph drawing algorithms 
	for planar 3-connec\-ted graphs
	that can be transferred to non-planar 3-connected graphs by using
	the 3-canonical order? 
\end{itemize}

\bibliographystyle{abbrv}
\bibliography{paper,journals,full,gd,papers}

\begin{thebibliography}{10}

\bibitem{Biedl1998}
T.~C. Biedl and G.~Kant.
\newblock A better heuristic for orthogonal graph drawings.
\newblock {\em Computational Geometry}, 9(3):159--180, 1998.

\bibitem{Cheriyan1988}
J.~Cheriyan and S.~N. Maheshwari.
\newblock Finding nonseparating induced cycles and independent spanning trees
  in 3-connected graphs.
\newblock {\em Journal of Algorithms}, 9(4):507--537, 1988.

\bibitem{Chiang2005}
Y.-T. Chiang, C.-C. Lin, and H.-I. Lu.
\newblock Orderly spanning trees with applications.
\newblock {\em SIAM Journal on Computing}, 34(4):924--945, 2005.

\bibitem{FM94}
H.~de~Fraysseix and P.~O. de~Mendez.
\newblock Regular orientations, arboricity and augmentation.
\newblock In R.~Tamassia and I.~G. Tollis, editors, {\em Graph Drawing
  (GD'94)}, volume 894 of {\em LNCS}, pages 111--118, 1994.

\bibitem{DS87}
P.~Dietz and D.~Sleator.
\newblock Two algorithms for maintaining order in a list.
\newblock In {\em 19th Annual {ACM} Symposium on Theory of Computing}, pages
  365--372, 1987.

\bibitem{FPP90}
H.~d. Fraysseix, J.~Pach, and R.~Pollack.
\newblock How to draw a planar graph on a grid.
\newblock {\em Combinatorica}, 10:41--51, 1990.

\bibitem{Kant1996}
G.~Kant.
\newblock Drawing planar graphs using the canonical ordering.
\newblock {\em Algorithmica}, 16(1):4--32, 1996.

\bibitem{Mondshein1971}
L.~F. Mondshein.
\newblock {\em Combinatorial Ordering and the Geometric Embedding of Graphs}.
\newblock PhD thesis, M.I.T. Lincoln Laboratory / Harvard University, 1971.

\bibitem{Papakostas1998}
A.~Papakostas and I.~G. Tollis.
\newblock Algorithms for area-efficient orthogonal drawings.
\newblock {\em Computational Geometry}, 9(1-2):83--110, 1998.

\bibitem{RT86}
P.~Rosenstiehl and R.~E. Tarjan.
\newblock Rectilinear planar layouts and bipolar orientation ofplanar graphs.
\newblock {\em Discrete Computational Geometry}, 1:343--353, 1986.

\bibitem{Schaeffter95}
M.~Sch{\"a}ffter.
\newblock Drawing graphs on rectangular grids.
\newblock {\em Discrete Applied Mathematics}, 63:75--89, 1995.

\bibitem{Schmidt2013}
J.~M. Schmidt.
\newblock The {M}ondshein sequence.
\newblock http://arxiv.org/pdf/1311.0750.pdf, 2013.

\bibitem{Schmidt2014}
J.~M. Schmidt.
\newblock The {M}ondshein sequence.
\newblock In {\em Proceedings of the 41st International Colloquium on Automata,
  Languages and Programming (ICALP'14)}, pages 967--978, 2014.

\bibitem{Bie-SWAT96}
\student{T. Biedl}.
\newblock Optimal orthogonal drawings of triconnected plane graphs.
\newblock In R.~Karlsson and A.~Lingas, editors, {\em Scandinavian Workshop on
  Algorithms Theory (SWAT'96)}, volume 1095 of {\em Lecture Notes in Computer
  Science}, pages 333--344. Springer-Verlag, 1996.

\bibitem{TT86}
R.~Tamassia and I.~Tollis.
\newblock A unified approach to visibility representations of planargraphs.
\newblock {\em Discrete Computational Geometry}, 1:321--341, 1986.

\bibitem{TTV91}
R.~Tamassia, I.~G. Tollis, and J.~S. Vitter.
\newblock Lower bounds for planar orthogonal drawings of graphs.
\newblock {\em Inf. Process. Lett.}, 39:35--40, 1991.

\bibitem{Valiant1981}
L.~G. Valiant.
\newblock Universality considerations in {VLSI} circuits.
\newblock {\em IEEE Trans. on Computers}, C-30(2):135--140, 1981.

\end{thebibliography}

%\shortversion{
%\newpage
%\pagestyle{empty}
%\section*{Appendix}
%\includepdf[pages=-]{./Appendix.pdf}
%}

\end{document}